\renewcommand{\baselinestretch}{1.7}
\def\Prob{{\rm P}\,}
\def\Exp{{\mathbb{E}}\,}
\def\tr{{\rm tr}\,}
\def\be{\begin{equation}}
\def\ee{\end{equation}}
\def\ba{\left[\begin{array}}
\def\ea{\end{array}\right]}
\def\bea{\begin{eqnarray}}
\def\eea{\end{eqnarray}}
\newcommand{\mb}[1]{\mathbf{#1}}
\newcommand{\mr}[1]{\mathrm{#1}}
\newcommand{\mc}[1]{\mathcal{#1}}
\newcommand{\ol}[1]{\overline{#1}}
\def\ba{{\bf a}}
\newtheorem{theorem}{\textbf{Theorem}}
\newtheorem{corollary}{\textbf{Corollary}}
\newtheorem{lemma}{\textbf{Lemma}}
\begin{document}
\title{Short-term Performance Limits of MIMO Systems with Side Information at
the Transmitter
}
%
\date{}
\author{Liangbin Li, Hamid Jafarkhani}
\affil{Center for Pervasive Communications \& Computing, University
of California, Irvine\vspace{-0.3in}} \maketitle

\begin{abstract}
The fundamental performance limits of space-time block code (STBC)
designs when perfect channel information is available at the
transmitter (CSIT) are studied in this report. With CSIT, the
transmitter can perform various techniques such as rate adaption,
power allocation, or beamforming. Previously, the exploration of
these fundamental results assumed long-term constraints, for
example, channel codes can have infinite decoding delay, and power
or rate is normalized over infinite channel-uses. With long-term
constraints, the transmitter can operate at the rate lower than the
instantaneous mutual information and error-free transmission can be
supported. In this report, we focus on the performance limits of
short-term behavior for STBC systems. We assume that the system has
\emph{block power constraint}, \emph{block rate constraint}, and
\emph{finite decoding delay}. With these constraints, although the
transmitter can perform rate adaption, power control, or
beamforming, we show that decoding-error is unavoidable. In the high
SNR regime, the diversity gain is upperbounded by the product of the
number of transmit antennas, receive antennas, and independent
fading block channels that messages spread over. In other words,
\emph{fading cannot be completely combatted with short-term
constraints.} The proof is based on a sphere-packing argument.

\end{abstract}

\renewcommand{\baselinestretch}{1}

\section{Introduction}
There are much interest in the research of side information at the
transmitter for multi-input multi-output (MIMO) communication
systems in fading channels. Various techniques have been proposed to
improve system performance using channel information at the
transmitter (CSIT): e.g., beamforming and precoder designs, power
allocation and rate adaption methods. The fundemental performance
limits of these techniques have been studied extensively. In terms
of decoding error probability, the performance limits can be
catergorized as:
\begin{enumerate}
  \item \emph{Full spatial-and-temproal diversity:} The receiver observes decoding errors. The error decays as a polynomial function of the signal-to-noise ratio (SNR) in the high SNR
region. The exponent is defined as the \emph{diversity gain}, which
is limited by the product of the number of transmit antennas,
receive antennas, and independent fading block channels.
  \item \emph{Infinite-diversity:} The receiver observes decoding errors.
The error decays exponentially with SNR, like the decoding error in
an additive white Gaussian noise (AWGN) channel. The diversity gain
is equal to infinite.
  \item \emph{Error-free transmission}: There is no decoding error at the receiver.
\end{enumerate}
Different system constraints result in different performance limits.
One of the constraints that will affect the performance limit is the
\emph{decoding delay}, which is defined as the number of
channel-uses that the receiver can wait before decoding messages. In
other words, it is the number of channel-uses where each information
bit spreads. Regarding to decoding delay, there are long-term power
and rate constraints as well as short-term power and rate
constraints.

Long-term constraints are determined by averaging over all the
channel states. A system employing rate adaption with long-term rate
constraint, fixed power, and finite decoding delay, can achieve
infinite diversity in fading channels\cite{Ca72}. The long-term rate
constraint can be written as
\begin{align*}
\int R(\mb{h})f(\mb{h})d\mb{h}\le R,
\end{align*}
where the instantaneous transmission rate $R(\mb{h})$ depends on the
equivalent channel vector $\mb{h}$, and $f(\mb{h})$ and $R$ denote
the probability density function (PDF) of $\mb{h}$ and average
transmission rate, respectively. When power allocation with
long-term constraint is used for a system with infinite decoding
delay, error-free transmission can be achieved for fixed rate. The
notion of \emph{delay-limited capacity} is defined as the
transmission rate that the system can reliablly support for all
channel realizations under long-term power allocation at the
transmitter\cite{Cai99}. The long-term power constraint can be
written as
\begin{align*}
\int P(\mb{h})f(\mb{h})d\mb{h}\le P,
\end{align*}
where the transmit power $P(\mb{h})$ is a function of the equivalent
channel vector $\mb{h}$, and $P$ denotes the average power. It is
shown that for single-input single-output (SISO) systems, the
delay-limited capacity is zero, while for MIMO systems, there is a
nonzero delay-limited capacity.

The long-term constraints are impractical assumptions for real-world
implementation. Some applications, for example, video transmission,
are delay sensitive and require finite decoding delays. In addition,
infinite-length codewords can lead to extremely high decoding
complexity. A power allocation with the long-term constraint may
result in an infinite peak power, which is intolerable for
electronic devices. Rate adaption with the long-term constraint
usually cuts off systems when the receive SNR is low. Nevertheless,
delay-sensitive applications need a minimum rate even though
channels are in deep fading. For these reasons, power or rate
constraint needs to be determined by averaging over a finite number
of channel-uses. These constraints define the short-term behavior of
communication systems. A mixture of both short-term and long-term
constraints have been considered in the literature. For example, a
system using long-term power allocation, fixed rate, and finite
decoding delay achieves infinite diversity in fading
channels\cite{Sha08}. We summarize the diversity performance with
respect to different types of constraints and a finite decoding
delay in Table \ref{table-known}.
\begin{table}[h]
  \centering
  \caption{Diversity performance for MIMO systems with finite decoding delay}\label{table-known}
\begin{tabular}{|c|c|c|}
  \hline
  Diversity gain & Long-term power constraint & Short-term power constraint
\\ \hline
  Long-term rate constraint & infinite & infinite\cite{Ca72} \\
  Short-term rate constraint & infinite\cite{Sha08} & {{unknown}} \\
  \hline
\end{tabular}
\end{table}

Note that short-term constraints are special cases of long-term
constraints. The performance achievable under short-term constraints
is achievable under long-term constraints as well. In Table
\ref{table-known}, the results under both long-term power and rate
constraints are straightforward extensions from the results in
\cite{Ca72, Sha08}. Conversely, the results under long-term
constraints cannot be used for a system with short-term constraints.
To the best of our knowledge, there is no rigorous results under
both short-term power and rate constraints.

This report aims at exploring the performance limits of transmitter
controls, for example, beamforming, power allocation, rate adaption
in conjunction with space-time block code (STBC) designs, under the
assumptions of perfect CSIT, finite decoding delay, and short-term
power and rate constraints. To allow power allocation and rate
adaption within the scope of short-term constraints, we introduce
\emph{transmission delay}, the number of channel-uses where power
and rate are constrained. The transmission delay is the sum of all
decoding delays for STBCs. Assume that a system needs to transmit
$RT$ bits of information using $L$ block codes, each with a decoding
delay constraint of $D_l$ $(l=1,\ldots, L)$, where
$T=\underset{l=1:L}{\sum} D_l$, denotes the transmission delay, and
$R$ denotes the average rate constraint. The transmission for $T$
channel-uses has a \emph{block power constraint} of $PT$, where $P$
denotes the average power constraint. \emph{We assume that the
channel information of all $T$ channel-uses is given noncausually to
the transmitter.} Then, rate adaption or power allocation can be
conducted within the scope of the transmission delay. For example,
the transmitter is aware that the Frobenius norm of the channel
matrix in the first $T/2$ channel-uses is higher than that in the
second $T/2$ channel-uses. Then, a rate adaption and power
allocation scheme can send all $RT$ information bits in the first
$T/2$ channel-uses using a power equal to $2P$. In the second $T/2$
channel-uses, the transmitter keeps silent.

Note that any transmitter control scheme with short-term constraints
on $T$ channel-uses can be viewed as a realization of concatenated
STBC with fixed-rate constraint $R$ bits/channel-use, sum-power
constraint $PT$, and decoding delay constraint $T$ channel-uses. The
performance limits of the concatenated block code can be applied
directly to any transmitter control scheme with short-term
constraints. Therefore, we study the performance limits of
fixed-rate STBC designs with sum-power and decoding delay
constraints when CSIT is available.

For fixed-rate and finite decoding-delay designs, there are two
approaches in AWGN channels. Gallager uses the \emph{random coding
argument} to show that the decoding error probability drops
exponentially with the code length\cite{Gallager}. The random codes
show an achievable performance of error probability, which implies
the existence of a good code outperforming it. Thus, an upperbound
on the achievable error probability is provided by the random codes.
Extension of this approach to MIMO systems can be found in
\cite{Ko05}. The other approach uses \emph{sphere-packing argument}.
It models code-design as a sphere-packing problem. The converse of
Shannon capacity can be shown using this argument\cite{Wozencraft}.
It can be extended to MIMO systems with no CSIT to provide a
lowerbound on the codeword decoding error probability for any STBC
design\cite{Fo03}.

To find the fundemantal performance limits of MIMO systems with
CSIT, we take the sphere-packing approach to calculate a lowerbound
on the codeword decoding error probability given short-term
constraints. The main contribution of this report can be summerized
as follows:
\begin{enumerate}
  \item For $M\times N$ MIMO systems where each message is encoded over $K$ independent fading blocks, we
show that the maximum diversity is $MNK$ under short-term
constraints even though the transmitter has perfect CSIT. The result
is the same as the scenario when channel information is not
available at the transmitter.
  \item Although \cite{Fo03} assumes a finite decoding delay, the authors assume sphere-hardening at the receiver\cite{Wozencraft}, which implies that the channel codes have infinite-length. Thus, the proposed bound in \cite{Fo03} is not a strict lowerbound on decoding error probability. In this report, we
avoid using sphere-hardening arguments. Thus, the results in this
report rely completely on finite decoding delays.
\end{enumerate}

We do not claim that our lowerbound is achievable. However, the
resulting diversity can be achieved using STBCs. The negative result
shows what the system cannot achieve, in other words, the
performance limits of STBC design.

The rest of this report is organized as follows. In Section
\ref{sec-model}, we explain the system model. Section
\ref{sec-bound} provides a lower bound on error probability. In
Section \ref{sec-numeric}, we show the simulated performance of the
derived lowerbounds. Conclusions are provided in Section
\ref{sec-conclusion}.

\textit{Notation}: For a matrix $\mb{A}$, let $\mb{A}^*$,
$\tr\left(\mb{A}\right)$, and $\|\mb{A}\|$ denote its Hermitian,
trace, and Frobenius norm, respectively. We define $\mc{CN}(0,1)$ as
circularly symmetrical complex Gaussian distribution with zero mean
and unit variance.

\section{System Model}\label{sec-model}
Consider an $M\times N$ MIMO system with $M$ transmit antennas and
$N$ receive antennas. The channel coefficients from the transmitter
to the receiver are modeled as an independent and identically
distributed (i.~i.~d.) $\mc{CN}(0,1)$ Gaussian random variable. The
channels are assumed to be block fading, i.e., the fading
coefficients remain fixed for a constant number of channel-uses and
change independently from one block to another. We call the interval
under which channel coefficients remain unchanged the \emph{block
length} of the code. We assume that the block length is $L$
channel-uses.

We consider STBC designs with a maximum usage of $K$ independent
channel blocks. In other words, the system has a \emph{decoding
delay constraint} of $K$ blocks, or equivalently $T=KL$
channel-uses. The input-output relationship can be described by
\begin{align}\label{eq-sys}
\mb{Y}_k=\mb{H}_k\mb{X}_k+\mb{N}_k, k=0, \ldots, K-1,
\end{align}
where $\mb{Y}_k$, $\mb{H}_k$, $\mb{X}_k$, and $\mb{N}_k$ denote
$N\times L$ receive signal matrix, $N\times M$ channel matrix,
$M\times L$ transmit signal matrix, and $N\times L$ AWGN noise
matrix, respectively. Each entry of $\mb{N}_k$ is i.~i.~d.
$\mc{CN}(0,1)$ Gaussian distributed. The subscript $k$ denotes the
index of the block. Let
\begin{align}\label{eq-notation}
{\mb{Y}}=\left[\begin{array}{c}{\mb{Y}}_0\\ \vdots
\\{\mb{Y}}_{K-1}\end{array}\right], {\mb{X}}=\left[\begin{array}{c}{\mb{X}}_0\\ \vdots
\\{\mb{X}}_{K-1}\end{array}\right], {\mb{N}}=\left[\begin{array}{c}{\mb{N}}_0\\ \vdots
\\{\mb{N}}_{K-1}\end{array}\right],
\mb{H}=\left[\begin{array}{ccc}\mb{H}_0&\cdots&0\\
\vdots &\ddots &\vdots
\\0&\cdots&\mb{H}_{K-1}\end{array}\right].
\end{align}
The system equation in \eqref{eq-sys} can be combined as
\begin{align}\label{eq-sys3}
{\mb{Y}}=\mb{H}{\mb{X}}+{\mb{N}}.
\end{align}


The transmitter needs to send a set of $2^m$ messages
$\mc{M}=\{m_i\}$, where the subscript $i$ is used to represent the
message index. A set of $2^m$ codewords $\mc{X}_c$ is generated,
where each codeword $\mc{X}_i\in \mc{X}_c$ has dimension $KM\times
L$. For message $m_i$, the codeword $\mc{X}_i$ is selected to be
transmitted over the equivalent system in \eqref{eq-sys3}. Since
$2^m$ messages are sent in $KL$ channel-uses, the transmission rate
can be computed as $R=\frac{m}{KL}$ bits/channel-use. We assume that
the equivalent channel matrix $\mb{H}$ is noncausually known at the
transmitter when designing $\mc{X}_c$. Note that this CSIT
assumption is stronger than the causual CSIT assumption, i.e., the
transmitter can only know the past and current channels, but not the
future channels. The negative reuslts in this report is hence
applicable to the scenario of causual CSIT. With CSIT, each codeword
${\mc{X}}_i$ can be a function of $\mb{H}$, i.e. $\mc{X}_i(\mb{H})$.
For simplicity, we use $\mc{X}_i$ instead of $\mc{X}_i(\mb{H})$
throughout the rest of this report. Moreover, we have a \emph{block
power constraint} for each codeword $\mc{X}_i$. Mathematically, it
can be expressed as $\tr\{\mc{X}_i\mc{X}_i^*\}\le KLP$ for
$i=1,\ldots, 2^m$.

The receiver is assumed to have perfect channel information, and
decodes after receiving the entire block $\mb{Y}$. The
maximum-likelihood (ML) method is used to recover transmitted
message $i$ as,
\begin{align} \hat{i}=\arg\underset{i,{\mc{X}}_i\in
\mc{X}_c}{\max} \Prob ({\mb{Y}}|{\mb{X}=\mc{X}}_i,\mb{H}),
i=1,\ldots, 2^m.
\end{align}
A decoding error occurs when $\hat{i}\neq i$. Then, the probability
of average codeword decoding error can be defined as
\begin{align}\label{eq-wer}
\mr{P}_\mr{E}=\underset{\mb{H},\mb{X}}{\Exp}\Prob(\hat{i}\neq
i|\mb{X},\mb{H}),
\end{align}
where the expectation is taken over all channel realizations of
$\mb{H}$ and codewords in $\mc{X}_c$. In this report, we aim at
finding a lowerbound on $\mr{P}_\mr{E}$ for any STBC design with
CSIT.

\section{A Lowerbound on $\mr{P}_\mr{E}$}\label{sec-bound}

In this section, we explain the sphere-packing approach to obtain
the lowerbound on $\mr{P}_\mr{E}$. First, a geometrical
interpretation is introduced to provide some notations. Then, we
tackle the problem relying on these notations.

\begin{figure}[h]
\centering
  \includegraphics[width=5in]{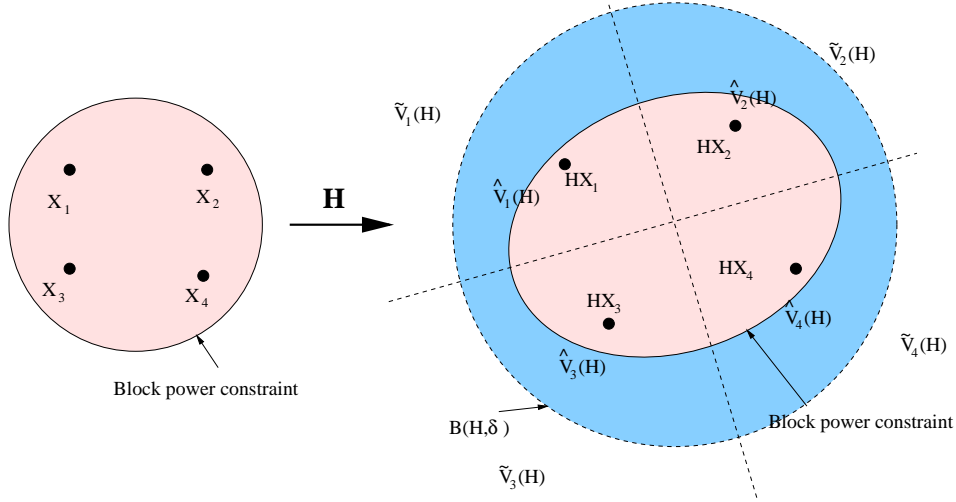}\\
  \caption{Geometric interpretation. The transmitter designs a set of four codewords with block power constraint (Left diagram). The receiver decodes by partitioning space into disjoint Voronoi Regions (Right diagram).}\label{Fig-partition}
\end{figure}

A geometrical interpretation of the system is described in
Fig.~\ref{Fig-partition}. Each transmit codeword $\mc{X}_i$ can be
interpreted as a point in $\mathbb{C}^{MLK}$. With block power
constraint, each codeword can only locate inside a hypersphere. The
equivalent channel $\mb{H}$ transforms $\mb{X}$ into a point in the
receive signal space $V$, which has dimension $\mathbb{C}^{NLK}$.
The transformation includes rotation and scale of the transmit
signal space. To illustrate the effect of transformation, the
hypersphere in the transmit signal space is transformed into a
hyper-ellipsoid in the receive signal space in
Fig.~\ref{Fig-partition}. The ML decoding method can be equivalently
described by partitioning the receive signal space $V$ into disjoint
\emph{Voronoi regions}. Let the Voronoi partition be
$V=\underset{i=1,\ldots,2^m}{\bigcup}V_i(\mb{H})$, where
$V_i(\mb{H})$ denotes the Voronoi region corresponding to
$\mc{X}_i$. A Voronoi region $V_i(\mb{H})$ is defined based on
distance metrics. For any point $\mb{Y}\in V_i(\mb{H})$, the
distance $\|{\mb{Y}}-\mb{H}{\mc{X}}_i\|$ is smaller than that of
$\|{\mb{Y}}-\mb{H}{\mc{X}}_j\|$ for all $j\neq i$. Then, an event of
decoding error can be equivalently interpreted as $\mb{Y}$ outside
the Voronoi region, i.e., $\mb{Y} \notin V_i(\mb{H})$ given
$\mc{X}_i$ is sent. Therefore, using the geometrical interpretation,
$\mr{P}_\mr{E}$ in \eqref{eq-wer} can be equivalently written as
\begin{align}\label{eq-wer2}
\mr{P}_\mr{E}=\underset{\mb{H}}{\Exp}\mr{P}_{\mr{E}|\mb{H}}=\underset{\mb{H}}{\Exp}\sum_i\Prob({\mb{Y}}\notin
V_i(\mb{H})|{\mb{X}=\mc{X}}_i,\mb{H})\Prob(\mb{X}=\mc{X}_i),
\end{align}
where $\mr{P}_{\mr{E}|\mb{H}}$ denotes decoding error probability
given the channel matrix $\mb{H}$.

We define some regions in the receive signal space $V$ to lowerbound
$\mr{P}_\mr{E}$. A hypersphere is defined as
\begin{align}\label{eq-sphere}
B(\mb{H},\delta)=\left\{\tr \left(\mb{Y}\mb{Y}^*\right)\le
\left(\sqrt{PLK \tr
\left(\mb{H}\mb{H}^*\right)}+\sqrt{NLK\delta}\right)^2 \right\},
\end{align}
where $\delta$ is any positive parameter to control the radius of
the hypersphere. The choice of $\delta$ will be discussed later. The
hypersphere defines a region that the receive signal resides with a
high probability due to block power constraint. Further, we denote
$\ol{B(\mb{H},\delta)}$ as the region outside the hypersphere
$B(\mb{H},\delta)$. Thus, the whole receive signal space can be
partitioned into the part inside hypersphere and the part outside
hypersphere, i.e., $V=B(\mb{H},\delta)\bigcup
\ol{B(\mb{H},\delta)}$. Since the volume of $V_i(\mb{H})$ may be
unbounded, we further partition each $V_i(\mb{H})$ into two parts
with reference to $B(\mb{H},\delta)$. We define
\begin{align}\label{eq-def}
\hat{V}_i(\mb{H})=V_i(\mb{H})\bigcap B(\mb{H},\delta), \quad
\tilde{V}_i(\mb{H})=V_i(\mb{H})\bigcap \ol{B(\mb{H},\delta)},
\end{align}
where $\hat{V}_i(\mb{H})$ is the part of $V_i(\mb{H})$ inside the
hypersphere $B(\mb{H},\delta)$ and $\tilde{V}_i(\mb{H})$ is the part
outside of the hypersphere. Fig.~\ref{Fig-partition} illustrates the
partition of receive signal space for a set of four codewords, i.e.,
$m=2$.

In what follows, we obtain a lowerbound on \eqref{eq-wer2}. For
simplicity, we can omit $\mb{H}$ in the notations used in
\eqref{eq-wer2}, \eqref{eq-sphere}, and \eqref{eq-def}. It follows
\begin{align}\nonumber
\mr{P}_{\mr{E}|\mb{H}}&=\sum_i\Prob({\mb{Y}}\notin
V_i|{\mb{X}}=\mc{X}_i)\Prob(\mb{X}=\mc{X}_i)\\ \nonumber &=\sum_i
\Prob(\mb{Y}\notin
\hat{V}_i|\mb{X}=\mc{X}_i)\Prob(\mb{X}=\mc{X}_i)-\sum_{i}
\Prob(\mb{Y}\in \tilde{V}_i|\mb{X}=\mc{X}_i)\Prob(\mb{X}=\mc{X}_i)\\
\nonumber &>\sum_i\Prob(\mb{Y}\notin
\hat{V}_i|\mb{X}=\mc{X}_i)\Prob(\mb{X}=\mc{X}_i)-\sum_i
\Prob(\mb{Y}\in
\ol{B(\delta)}|\mb{X}=\mc{X}_i)\Prob(\mb{X}=\mc{X}_i)\\
\label{eq-bound} &=\underset{A}{\underbrace{\sum_i\Prob(\mb{Y}\notin
\hat{V}_i|\mb{X}=\mc{X}_i)}}\Prob(\mb{X}=\mc{X}_i)- \Prob(\mb{Y}\in
\ol{B(\delta)}),
\end{align}
where we have Line $2$ since $\tilde{V}_i$ and $\hat{V}_i$ are
disjoint sets and $\tilde{V}_i\bigcup\hat{V}_i=V_i$; the inequality
in Line $3$ is true since $\tilde{V}_i$ is included in
$\ol{B(\delta)}$. The following two lemmas are needed to provide a
lowerbound on $\mr{P}_{\mr{E}}$.
\renewcommand{\baselinestretch}{1.4}
\begin{lemma}\label{lemma1}
Let $S(r_i)$ be an $(nLK)$-hypersphere centered at $\mb{H}\mc{X}_i$
with a radius of $r_i$. The radius $r_i$ is selected such that
$S(r_i)$ and $\hat{V}_i$ have the same volume. Substituting
$\hat{V}_i$ with $S(r_i)$ in $\Prob(\mb{Y}\notin
\hat{V}_i|\mb{X}=\mc{X}_i)$, we have
\begin{align*}
\Prob(\mb{Y}\notin \hat{V}_i|\mb{X}=\mc{X}_i)\ge \Prob(\mb{Y}\notin
S(r_i)|\mb{X}=\mc{X}_i)
\end{align*}
\end{lemma}
\begin{proof}
See \cite{Pos71}.
\end{proof}
Intuitively, this lemma can be explained using
Fig.~\ref{Fig-sphere}. The PDF of $\mb{Y}$ given $\mc{X}_i$ depends
only on the radius $r_i$. For any point in Region III, its PDF is
higher than that of any point in Region II. Then, the probability of
the receive signal in Region III is higher than that in Region II.
As a result, the probability of $\mb{Y}$ being inside the sphere
$S(r_i)$ is higher than that inside $\hat{V}_i$. Conversely, it is
less likely for $\mb{Y}$ to be outside $S(r_i)$ than $\hat{V}_i$.

\begin{figure}
\centering
  \includegraphics[width=3in]{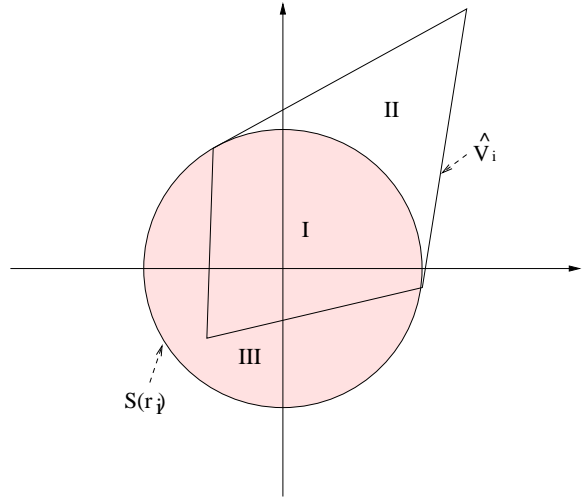}\\
  \caption{Lemma 1: Lowerbounds using sphere as Voronoi Region.}\label{Fig-sphere}
\end{figure}

\begin{lemma}\label{lemma2}
The probability of $\mb{Y}$ to be outside of the hypersphere
$B(\delta)$ is upperbounded by
\begin{align*}
\Prob\Big(\mb{Y}\in \ol{B(\delta)}\Big)\le
\frac{\Gamma(NLK,NLK\delta)}{\Gamma(NLK)},
\end{align*}
where $\Gamma(n,x)$ denotes the incomplete Gamma function, i.e.,
$\Gamma(n,x)=\int_{x}^{+\infty} t^{n-1}\mr{e} ^{-t} \mr{d}t$.
\end{lemma}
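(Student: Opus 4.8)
The plan is to turn the channel- and codeword-dependent event $\{\mb{Y}\in\ol{B(\delta)}\}$ into a single noise-energy event that is independent of both $\mb{H}$ and the transmitted $\mc{X}_i$, and then to read off the tail probability of that noise energy in closed form. The radius of $B(\mb{H},\delta)$ has been chosen exactly so that the deterministic part $\mb{H}\mc{X}_i$ of the received signal always sits inside the sphere, leaving only the additive Gaussian term able to eject $\mb{Y}$ from it.

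First I would bound the received signal energy uniformly over the CSIT-dependent codebook. Writing $\|\mb{H}\mc{X}_i\|^2=\tr(\mb{H}\mc{X}_i\mc{X}_i^*\mb{H}^*)=\tr(\mb{H}^*\mb{H}\,\mc{X}_i\mc{X}_i^*)$ by cyclicity of the trace, and using the positive-semidefinite trace inequality $\tr(AB)\le\lambda_{\max}(A)\,\tr(B)$ with $A=\mb{H}^*\mb{H}$ and $B=\mc{X}_i\mc{X}_i^*$, I get $\|\mb{H}\mc{X}_i\|^2\le\sigma_{\max}(\mb{H})^2\,\tr(\mc{X}_i\mc{X}_i^*)$. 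Since $\sigma_{\max}(\mb{H})^2\le\sum_j\sigma_j(\mb{H})^2=\tr(\mb{H}\mb{H}^*)$ and the block power constraint gives $\tr(\mc{X}_i\mc{X}_i^*)\le KLP$, this chains to $\|\mb{H}\mc{X}_i\|\le\sqrt{PLK\,\tr(\mb{H}\mb{H}^*)}$ for every codeword.

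Next I would apply the triangle inequality to $\mb{Y}=\mb{H}\mc{X}_i+\mb{N}$, namely $\|\mb{Y}\|\le\|\mb{H}\mc{X}_i\|+\|\mb{N}\|$. By definition $\mb{Y}\in\ol{B(\delta)}$ means $\|\mb{Y}\|>\sqrt{PLK\,\tr(\mb{H}\mb{H}^*)}+\sqrt{NLK\delta}$, so combining with the signal-energy bound of the previous step forces $\|\mb{N}\|>\sqrt{NLK\delta}$. Hence $\Prob(\mb{Y}\in\ol{B(\delta)})\le\Prob(\|\mb{N}\|^2>NLK\delta)$, and the dependence on $\mb{H}$ disappears because $\mb{N}$ is independent of the channel.

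Finally I would compute the noise-energy tail. The matrix $\mb{N}$ carries $NLK$ i.i.d.\ $\CN(0,1)$ entries, so each $|n_j|^2$ is unit-mean exponential and $\|\mb{N}\|^2=\sum_{j=1}^{NLK}|n_j|^2$ is $\mathrm{Gamma}(NLK,1)$ distributed; its survival function is precisely $\Prob(\|\mb{N}\|^2>y)=\Gamma(NLK,y)/\Gamma(NLK)$, and $y=NLK\delta$ gives the claim. The only step that needs care is the first one: the sphere radius is matched to the Frobenius norm $\tr(\mb{H}\mb{H}^*)$ rather than the operator norm, so I must use the inequality $\sigma_{\max}(\mb{H})^2\le\tr(\mb{H}\mb{H}^*)$ in the correct direction and ensure the signal bound holds for every $\mc{X}_i$ even though the codewords are functions of $\mb{H}$; everything after that is a deterministic triangle-inequality reduction followed by a standard chi-square-type tail identity.
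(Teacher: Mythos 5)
Your proposal is correct and follows essentially the same route as the paper: bound the signal energy deterministically by $\sqrt{PLK\,\tr(\mb{H}\mb{H}^*)}$ using the block power constraint (the paper uses $\tr(\mb{A}\mb{B})\le\tr(\mb{A})\tr(\mb{B})$ for positive semidefinite matrices where you chain $\tr(\mb{A}\mb{B})\le\lambda_{\max}(\mb{A})\tr(\mb{B})\le\tr(\mb{A})\tr(\mb{B})$, yielding the identical bound), then apply the triangle inequality so that escaping $B(\delta)$ forces $\tr(\mb{N}\mb{N}^*)>NLK\delta$, whose tail is the stated incomplete-Gamma ratio. Your phrasing as a deterministic event inclusion is a slightly cleaner packaging of the paper's probability-one argument, but it is the same proof.
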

\renewcommand{\baselinestretch}{1.4}
\begin{proof}
From \eqref{eq-sys3}, $\mb{Y}$ is the sum of $\mb{H}\mb{X}$ and
$\mb{N}$. With a block power constraint, the Frobenius norm of the
first term can be bounded as $\tr (\mb{X}^*\mb{H}^*\mb{H}\mb{X})\le
\tr (\mb{X}^*\mb{X})\tr (\mb{H}^*\mb{H})=KLP\tr (\mb{H}^*\mb{H})$.
Then, we have
\begin{align*}
\Prob\Big(\tr (\mb{X}^*\mb{H}^*\mb{H}\mb{X})\le KLP\tr
(\mb{H}^*\mb{H})\Big)=1.
\end{align*}
From \eqref{eq-notation}, since $\mb{N}$ is the equivalent $NK\times
L$ noise matrix and each entry is i. i. d. $\mc{CN}(0,1)$
distributed, $\tr (\mb{N}\mb{N}^*)$ is Chi-square distributed with
$2NLK$ degrees of freedom. Then, we can compute
\begin{align*}
\Prob\Big(\tr (\mb{N}\mb{N}^*)\le NLK\delta\Big)=1-\frac{\Gamma(NLK,
NLK\delta)}{\Gamma(NLK)}.
\end{align*}

Since the norm of the sum of two matrices can be upperbounded by the
sum of the norms of each matrix, we have $\sqrt{\tr
(\mb{Y}^*\mb{Y})}\le \sqrt{\tr
(\mb{X}^*\mb{H}^*\mb{H}\mb{X})}+\sqrt{\tr (\mb{N}\mb{N}^*)}$. The
probability of $\mb{Y}$ falling into the hypersphere $B(\delta)$ can
be bounded as
\begin{align*}
\Prob(\mb{Y}\in B(\delta))=&\Prob\Big(\tr (\mb{Y}^*\mb{Y})\le
\left(\sqrt{KLP\tr
(\mb{H}^*\mb{H})}+\sqrt{NLK\delta}\right)^2\Big)\\
= & \Prob\Big(\sqrt{\tr (\mb{Y}^*\mb{Y})}\le \sqrt{KLP\tr
(\mb{H}^*\mb{H})}+\sqrt{NLK\delta}\Big)\\
\ge & \Prob\Big(\sqrt{\tr (\mb{X}^*\mb{H}^*\mb{H}\mb{X})}+\sqrt{\tr
(\mb{N}\mb{N}^*)}\le \sqrt{KLP\tr
(\mb{H}^*\mb{H})}+\sqrt{NLK\delta}\Big)\\
 \ge & \Prob\Big(\tr
(\mb{X}^*\mb{H}^*\mb{H}\mb{X}) \le KLP\tr
(\mb{H}^*\mb{H})\Big)\Prob\Big(\tr (\mb{N}\mb{N}^*) \le
NLK\delta\Big)=1-\frac{\Gamma(NLK, NLK\delta)}{\Gamma(NLK)}.
\end{align*}
Therefore, the probability that $\mb{Y}$ is outside the hypersphere
$B(\delta)$ is upperbounded by $\frac{\Gamma(NLK,
NLK\delta)}{\Gamma(NLK)}$.
\end{proof}
Lemma 2 says that with a block power constraint, the receive signal
$\mb{Y}$ is constrained in the hypersphere $B(\delta)$ with a high
probability. Proving Lemmas \ref{lemma1} and \ref{lemma2}, we are
ready for the following theorem.

\begin{theorem}\label{thm}
When the messages are equiprobable, i.e.,
$\Prob(\mb{X}=\mc{X}_i)=\frac{1}{2^m}$, a lowerbound on
$\mr{P}_{\mr{E}}$ can be obtained as\footnote{When messages are not
uniformly distributed, we can use $\min_i P(\mb{X}=\mc{X}_i)$ to
lowerbound $P(\mb{X}=\mc{X}_i)$. It is straightforward to extend the
results to the case of non-uniform messages. }
\begin{align}\mr{P}_{\mr{E}}\ge \frac{P^{NLK}\Gamma(NLK+MNK)}{\Gamma(NLK)\Gamma(MNK)}\int_{a}^{b}\frac{\mc{X}^{NLK-1}}{(1+P\mc{X})^{MNK+NLK}}
\mr{d}\mc{X},
\end{align}
where the bounds of the integral are
$a=LK2^{-R/N}\left(\frac{\sqrt{2}}{2^{R/(2N)}-1}+1\right)^2$,
$b=\frac{2LK}{\left(2^{R/(2N)}-1\right)^2}$.
\end{theorem}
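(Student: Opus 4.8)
The plan is to turn the geometric bound \eqref{eq-bound} into a statement about two independent Gamma-distributed quantities and then integrate. First I would assemble the three pieces already in hand. Using the equiprobable assumption to factor $\Prob(\mb{X}=\mc{X}_i)=2^{-m}$ out of \eqref{eq-bound}, applying Lemma~\ref{lemma1} term by term to replace each $\hat{V}_i$ by its equal-volume sphere $S(r_i)$, and bounding the last term with Lemma~\ref{lemma2}, I get
\begin{align*}
\mr{P}_{\mr{E}|\mb{H}}\ge \frac{1}{2^m}\sum_i\Prob\big(\|\mb{N}\|^2> r_i^2\big)-\frac{\Gamma(NLK,NLK\delta)}{\Gamma(NLK)},
\end{align*}
where I used $\mb{Y}-\mb{H}\mc{X}_i=\mb{N}$ so that $\Prob(\mb{Y}\notin S(r_i)\mid\mc{X}_i)=\Prob(\|\mb{N}\|^2>r_i^2)$, and where $\|\mb{N}\|^2=\tr(\mb{N}\mb{N}^*)$ is $\mathrm{Gamma}(NLK)$ distributed.

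Next I would pin down the radii. Because $\bigcup_i\hat{V}_i=B(\mb{H},\delta)$ and the $V_i$ are disjoint, the equal-volume property forces $\sum_i r_i^{2NLK}=\big(\sqrt{PLK\,\tr(\mb{H}\mb{H}^*)}+\sqrt{NLK\delta}\big)^{2NLK}$. Viewing $\Prob(\|\mb{N}\|^2>r^2)$ as a function of the sphere volume $r^{2NLK}$, a short computation shows this function is convex; hence, to obtain a bound valid against \emph{every} codebook, I minimize $\sum_i\Prob(\|\mb{N}\|^2>r_i^2)$ subject to the fixed-sum constraint, and the minimizer is the equal-radius configuration. With $m=RKL$ this gives the single radius $r^2=2^{-R/N}\big(\sqrt{PLK\,\tr(\mb{H}\mb{H}^*)}+\sqrt{NLK\delta}\big)^2$ and the clean conditional bound $\mr{P}_{\mr{E}|\mb{H}}\ge\Prob(\|\mb{N}\|^2>r^2)-\Prob(\|\mb{N}\|^2>NLK\delta)$, which for $h:=\tr(\mb{H}\mb{H}^*)$ below a threshold equals the spherical-shell probability $\Prob(r^2<\|\mb{N}\|^2\le NLK\delta)$.

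Now I would take $\Exp_{\mb{H}}$ and introduce the natural ratio variable. The quantity $h=\tr(\mb{H}\mb{H}^*)$ is $\mathrm{Gamma}(MNK)$ distributed and independent of $\|\mb{N}\|^2$, so the ratio $Z=\|\mb{N}\|^2/(Ph)$ has exactly the density $\frac{P^{NLK}\Gamma(NLK+MNK)}{\Gamma(NLK)\Gamma(MNK)}\,\mc{X}^{NLK-1}(1+P\mc{X})^{-(MNK+NLK)}$ appearing in the theorem; the target is therefore $\mr{P}_{\mr{E}}\ge\Prob(a\le Z\le b)$. Rewriting the shell event $r^2<\|\mb{N}\|^2\le NLK\delta$ in terms of $Z$ turns it into an interval $(z_{\mr{lo}}(h),z_{\mr{hi}}(h)]$ whose endpoints depend on $h$ only through $\eta=\sqrt{N\delta/(Ph)}$, with both endpoints monotone in $\eta$. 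Collapsing these endpoints at the representative value $\eta=\sqrt{2}/(2^{R/(2N)}-1)$ produces precisely the stated constants $a$ and $b$, after which integrating the ratio density over $[a,b]$ closes the argument.

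The hard part will be this last step. For channel norms above the threshold $h_0=N\delta(2^{R/(2N)}-1)^2/P$ the two incomplete-Gamma terms cross and the conditional bound turns negative, while even below $h_0$ the shell interval in $Z$ slides with $h$; since the subtracted term $\Prob(\|\mb{N}\|^2>NLK\delta)$ couples $\|\mb{N}\|^2$ and $h$, replacing the $h$-dependent shell by a single fixed ratio-interval $[a,b]$ is not a plain set inclusion. Choosing $\delta$ so that the resulting signed integral still dominates $\Prob(a\le Z\le b)$ — equivalently, verifying that the constants read off at $\eta=\sqrt{2}/(2^{R/(2N)}-1)$ really yield a valid lower bound — is where the genuine work lies; everything else is bookkeeping with Gamma integrals.
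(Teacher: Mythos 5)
Your first two steps track the paper's proof exactly: the assembly of Lemmas \ref{lemma1} and \ref{lemma2} into the conditional bound \eqref{eq-bound31}, and the equal-radius sphere-packing solution giving $r_\mr{o}^2=2^{-R/N}\big(\sqrt{PLK\,h}+\sqrt{NLK\delta}\big)^2$ (the paper cites \cite{Fo03} for the equal-radius minimizer where you invoke convexity in the volume variable; same step). Your identification of the theorem's integrand as the density of the ratio $Z=\tr(\mb{N}\mb{N}^*)/(Ph)$ with $h=\tr(\mb{H}\mb{H}^*)$ is also correct, and the theorem's right-hand side is indeed $\Prob(a\le Z\le b)$. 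But the step you flag as ``where the genuine work lies'' is a genuine gap, and the route you sketch to close it would not work: with $\delta$ held constant, the shell endpoints in $Z$ really do slide with $h$, the conditional bound really does go negative for $h$ above a threshold, and ``collapsing the endpoints at a representative value of $\eta$'' is not a valid lower-bounding operation -- there is no inclusion or convexity argument that makes it one.

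The missing idea is that $\delta$ is a free analysis parameter \emph{per channel realization}: the bound \eqref{eq-bound3} holds conditionally on $\mb{H}$ for every $\delta>0$, so one may legitimately set $\delta$ to be a function of $\mb{H}$ before taking the expectation. The paper chooses $\delta=\frac{2P}{N(2^{R/(2N)}-1)^2}\,\tr(\mb{H}\mb{H}^*)$, i.e., proportional to $h$. This makes \emph{both} shell endpoints exactly proportional to $Ph$: $r_\mr{o}(\delta)^2=aPh$ and $NLK\delta=bPh$ with precisely the constants $a$ and $b$ in the statement. The positivity requirement \eqref{eq-req} then reduces to $a<b$, which is equivalent to $2^{R/(2N)}>1$ and thus holds for \emph{every} $h$ simultaneously -- your threshold $h_0$ and the sign-crossing problem never arise. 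The conditional bound becomes exactly $\frac{1}{\Gamma(NLK)}\int_{aPh}^{bPh}x^{NLK-1}\mr{e}^{-x}\,\mr{d}x$, i.e., exactly $\Prob(aPh<\tr(\mb{N}\mb{N}^*)\le bPh\,|\,h)$, and averaging over $h\sim\mathrm{Gamma}(MNK)$ with the substitution $x=\mc{X}Ph$ and an interchange of integrals gives the stated formula. In your parametrization, the paper's choice amounts to \emph{fixing} $\eta=\sqrt{N\delta/(Ph)}=\sqrt{2}/(2^{R/(2N)}-1)$ identically in $h$, rather than treating it as a representative value -- legitimate precisely because $\delta$ may depend on $\mb{H}$, which is the one observation your proposal lacks.
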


\begin{proof}
When messages are equiprobable, from Lemma \ref{lemma1}, we can
further lowerbound the term $A$ in \eqref{eq-bound} as
\begin{align}\nonumber
A&=\frac{1}{2^m}\sum_{i}\Prob\Big(\mb{Y}\notin \hat{V}_i|\mb{X}=\mc{X}_i\Big)\\
&\ge\frac{1}{2^m}\sum_{i}\Prob\Big(\mb{Y}\notin
S(r_i)|\mb{X}=\mc{X}_i\Big)=\frac{1}{2^m}\sum_{i}\frac{\Gamma(NLK,
r_i^2)}{\Gamma(NLK)}.\label{eq-bound2}
\end{align}
Substituting \eqref{eq-bound2} and the result of Lemma \ref{lemma2}
into \eqref{eq-bound}, we can lowerbound $\mr{P}_{\mr{E}|\mb{H}}$ as
\begin{align}\label{eq-bound31}
\mr{P}_{\mr{E}|\mb{H}}\ge\frac{1}{2^m}\sum_{i}\frac{\Gamma(NLK,
r_i^2)}{\Gamma(NLK)}-\frac{\Gamma(NLK,NLK\delta)}{\Gamma(NLK)}.
\end{align}
In what follows, we further lowerbound the RHS of \eqref{eq-bound31}
to help analyze diversity. Note that $\hat{V}_i$ is inside the
hypersphere $B(\delta)$ and $\hat{V}_i$ has the same volume as
$S(r_i)$. There is an additional constraint on the sum of the
volumes of $S(r_i)$, i.e.,
$\mr{Vol}\Big(B(\delta)\Big)=\underset{i}{\sum}
\mr{Vol}\Big(S(r_i)\Big)$.  
Thus, an optimization problem can be formulated as
\begin{align}\nonumber
\min_{r_i}&\ \frac{1}{2^m}\sum_{i}\frac{\Gamma(NLK,
r_i^2)}{\Gamma(NLK)}-\frac{\Gamma(NLK,NLK\delta)}{\Gamma(NLK)}\\
\mr{s.t.}&\ \underset{i}{\sum}\nonumber
\mr{Vol}\Big(S(r_i)\Big)=\mr{Vol}\Big(B(\delta)\Big),\\
&\delta >0.\label{eq-opt}
\end{align}
Since $\delta$ is a constant, the second term in the objective
function can be ignored. This minimization problem is known as
\emph{sphere-packing}. The solution is obtained when each
hypersphere $S(r_i)$ has equal radius, i.e., $r_i=r_\mr{o}(\delta)$
for $i=1,\ldots, 2^m$\cite{Fo03}. Since the volume of a hypersphere
in $\mathbb{C}^{n}$ is $\mr{Vol}=R_nr^{2n}$ where $R_n$ is a
constant depending on dimensions and $r$ is the radius, from the
constraint, we have
\begin{align*}
2^m R_n r_\mr{o}(\delta)^{2NLK}=R_n \left(\sqrt{PLK\tr
(\mb{H}\mb{H}^*)}+\sqrt{NLK\delta}\right)^{2NLK}.
\end{align*}
After dividing both sides of the above equation by $2^mR_n$ and
taking $NLK$th root, we have
\begin{align}\label{eq-radius}
r_\mr{o}(\delta)^2=NLK2^{-\frac{R}{N}}
\left(\sqrt{\delta}+\sqrt{\frac{P}{N}\tr (\mb{H}\mb{H}^*)}\right)^2,
\end{align}
where $R=\frac{m}{LK}$, denoting the bit-rate per channel-uses of
the STBC designs. Replacing \eqref{eq-radius} into
\eqref{eq-bound31}, we have a lowerbound on $\mr{P}_{\mr{E}|\mb{H}}$
as
\begin{align}
\mr{P}_{\mr{E}|\mb{H}}\ge \frac{\Gamma(NLK,
r_\mr{o}(\delta)^2)}{\Gamma(NLK)}-\frac{\Gamma(NLK,NLK\delta)}{\Gamma(NLK)}.
\label{eq-bound3}
\end{align}

In what follows, we discuss the choice of $\delta$. In \cite{Fo03},
$\delta$ is chosen to be one, and the probability of $\mb{Y}$ to be
outside the hypersphere, i.e., the second term in Line $4$ of
\eqref{eq-bound}, is not taken into account. Therefore, the
lowerbound obtained in \cite{Fo03} is not a tight lowerbound. In
this report, we formulate the lowerbound considering both the events
when $\mb{Y}$ is inside and outside the hypersphere. 
From \eqref{eq-radius}, the radius of each hypershpere $r_\mr{o}$
depends on $\delta$, and for any positive $\delta$, the RHS of
\eqref{eq-bound3} provides a new lowerbound. To find an explicit
lowerbound on $\mr{P}_{\mr{E}}$, we choose a $\delta$ that results
in a positive number on the RHS of \eqref{eq-bound3}\footnote{Note
that the best lowerbound can be found by futher maximizing
\eqref{eq-bound3} with respect to $\delta$. The optimal $\delta$ can
be obtained by calculating the derivative of \eqref{eq-bound3} and
set the derivative to zero. The resulting equality is nonlinear in
$\delta$, and cannot be solved explicitely. Therefore, we cannot
find an explicit lowerbound on $\mr{P}_{\mr{E}}$ using the optimal
$\delta$.}. Note that the first and second terms in
\eqref{eq-bound3} are both incomplete Gamma functions with $NLK$
{{degrees of freedom}}. To have a positive lowerbound, we need
\begin{align}\label{eq-req}
r_\mr{o}(\delta)^2<NLK\delta.
\end{align}
Substituting \eqref{eq-radius} into \eqref{eq-req}, we have
\begin{align*}
&2^{-\frac{R}{N}} \left(\sqrt{\delta}+\sqrt{\frac{P}{N}\tr
(\mb{H}\mb{H}^*)}\right)^2<\delta\\
&\frac{\delta}{\frac{P}{N}\tr
(\mb{H}\mb{H}^*)}>\frac{1}{\left(2^{\frac{R}{2N}}-1\right)^2}.
\end{align*}
Then, we let $\delta=\frac{2P}{N(2^{\frac{R}{2N}}-1)^2}\tr
(\mb{H}\mb{H}^*)$. We can expand the RHS of \eqref{eq-bound3} into
an integral as
\begin{align}\nonumber
\mr{P}_{\mr{E}|\mb{H}}&\ge\frac{1}{\Gamma(NLK)}\int_{r_\mr{o}^2}^{NLK\delta}x^{NLK-1}\mr{e}^{-x}
\mr{d}x\\
&=\frac{1}{\Gamma(NLK)}\int_{PLK2^{-R/N}\left(\frac{\sqrt{2}}{2^{R/(2N)}-1}+1\right)^2\tr
(\mb{H}\mb{H}^*)}^{\frac{2PLK}{\left(2^{R/(2N)}-1\right)^2}\tr
(\mb{H}\mb{H}^*)}x^{NLK-1}\mr{e}^{-x} \mr{d}x.\label{eq-bound4}
\end{align}
Further let
$a=LK2^{-R/N}\left(\frac{\sqrt{2}}{2^{R/(2N)}-1}+1\right)^2$,
$b=\frac{2LK}{\left(2^{R/(2N)}-1\right)^2}$, and $h=\tr
(\mb{H}\mb{H}^*)$ to simplify the notations, and integrate
\eqref{eq-bound4} over $h$. From \eqref{eq-notation}, since $\tr
(\mb{H}\mb{H}^*)$ is Chi-square distributed with $2MNK$ degrees of
freedoms, we obtain a lowerbound on $\mr{P}_{\mr{E}}$ as
\begin{align}\nonumber
\mr{P}_{\mr{E}}&\ge\frac{1}{\Gamma(NLK)\Gamma(MNK)}\int_{0}^{+\infty}h^{MNK-1}\mr{e}^{-h}\int_{aPh}^{bPh}x^{NLK-1}\mr{e}^{-x}
\mr{d}x\mr{d}h\\ \nonumber
&=\frac{1}{\Gamma(NLK)\Gamma(MNK)}\int_{0}^{+\infty}h^{MNK+NLK-1}P^{NLK}\mr{e}^{-h}\int_{a}^{b}\mc{X}^{NLK-1}\mr{e}^{-Ph\mc{X}}
\mr{d}\mc{X}\mr{d}h\\ \nonumber
&=\frac{P^{NLK}}{\Gamma(NLK)\Gamma(MNK)}\int_{a}^{b}\mc{X}^{NLK-1}
\mr{d}\mc{X}\int_{0}^{+\infty}
h^{MNK+NLK-1}\mr{e}^{-h(1+P\mc{X})}\mr{d}h\\
\label{eq-int}&=\frac{P^{NLK}\Gamma(NLK+MNK)}{\Gamma(NLK)\Gamma(MNK)}\int_{a}^{b}\frac{\mc{X}^{NLK-1}}{(1+P\mc{X})^{MNK+NLK}}
\mr{d}\mc{X}
\end{align}
In Line $2$, we have replaced the variable $x$ with $\mc{X}Ph$. This
concludes the proof.
\end{proof}

The proof of Theorem \ref{thm} uses the sphere-packing argument. The
diversity performance is given in the following corollary.

\begin{corollary}\label{cor}
With short-term decoding delay, block power, and block rate
constraints, the diversity of any STBC design is upperbounded by
$MNK$. It is independent of whether CSI is available at the
transmitter or not.
\end{corollary}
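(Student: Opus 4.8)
The plan is to extract the high-SNR decay rate directly from the lowerbound established in Theorem~\ref{thm} and then to argue that none of the steps leading to that bound relied on the codewords being functions of $\mb{H}$. Recall that the diversity gain is the negative SNR-exponent of the error probability, $d = -\lim_{P\to\infty} \frac{\log \mr{P}_{\mr{E}}}{\log P}$, where $P$ plays the role of SNR since the noise has unit variance. Because the bound in Theorem~\ref{thm} is a genuine lowerbound on $\mr{P}_{\mr{E}}$, it suffices to show that its right-hand side decays no faster than $P^{-MNK}$; this forces $d \le MNK$.

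First I would observe that the integration limits $a$ and $b$ depend only on $L$, $K$, $N$, and the rate $R$, and are therefore constants independent of $P$. A short calculation (writing $u = 2^{R/(2N)}-1 > 0$) confirms that $0 < a < b < \infty$, so $[a,b]$ is a fixed compact interval bounded away from the origin. Consequently $\mc{X} \ge a > 0$ throughout the integral, and for every $P \ge 1/a$ one has $1 + P\mc{X} \le 2P\mc{X}$ uniformly on $[a,b]$.

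The key step is then to read off the $P$-scaling. Replacing $(1+P\mc{X})^{MNK+NLK}$ in the denominator by the larger quantity $(2P\mc{X})^{MNK+NLK}$ gives, for all $P \ge 1/a$,
\begin{align*}
\mr{P}_{\mr{E}} \ge \frac{\Gamma(NLK+MNK)}{\Gamma(NLK)\,\Gamma(MNK)\,2^{MNK+NLK}}\; P^{-MNK} \int_{a}^{b} \mc{X}^{-MNK-1}\,\mr{d}\mc{X}.
\end{align*}
The prefactor $P^{NLK}$ cancels $NLK$ of the $MNK+NLK$ powers of $P$ in the denominator, leaving exactly $P^{-MNK}$, while the remaining integral is a strictly positive finite constant since the integrand is continuous on the compact set $[a,b]\subset(0,\infty)$. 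Hence $\mr{P}_{\mr{E}} \ge C\,P^{-MNK}$ for a constant $C>0$, and dividing the logarithm of this inequality by $\log P$ yields $d \le MNK$.

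Finally I would address the CSIT-independence claim, which I expect to be the only genuinely conceptual point rather than a routine estimate. The entire derivation of Theorem~\ref{thm} used the codewords only through the block power constraint $\tr\{\mc{X}_i \mc{X}_i^*\} \le KLP$ (via Lemma~\ref{lemma2}) and the sphere-packing optimality that equalizes the radii $r_i$; nowhere did it exploit whether $\mc{X}_i$ was allowed to depend on $\mb{H}$. Since any CSIT-aware code $\mc{X}_i(\mb{H})$ still satisfies the same per-realization power constraint, the conditional bound \eqref{eq-bound3}, and hence the diversity upperbound $MNK$, holds verbatim with or without CSIT. The point to be careful about is that the expectation over $\mb{H}$ in \eqref{eq-wer2} is taken \emph{after} conditioning on $\mb{H}$, so letting the codebook vary with $\mb{H}$ merely reshapes the conditional Voronoi regions without invalidating the per-$\mb{H}$ sphere-packing lowerbound.
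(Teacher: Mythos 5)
Your proof is correct and takes essentially the same route as the paper's: both extract the SNR exponent $P^{-MNK}$ from the lowerbound of Theorem~\ref{thm} by bounding the factor $(1+P\mc{X})$ uniformly over the fixed, $P$-independent interval $[a,b]$ (the paper uses $1+P\mc{X}\le 1+Pb$ where you use $1+P\mc{X}\le 2P\mc{X}$ for $P\ge 1/a$), reducing the integral to a positive constant times $P^{-MNK}$. Your closing argument for CSIT-independence likewise mirrors the paper's remark that none of the lowerbounding steps exploited dependence of the codewords on $\mb{H}$.
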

\begin{proof}
The diversity gain is defined as
\begin{align}\label{eq-div}
d=-\lim_{P\rightarrow \infty}\frac{\log \mr{P}_{\mr{E}}}{\log P}.
\end{align}
Replacing the lowerbound obtained in Theorem \ref{thm} into
\eqref{eq-div} results in an upperbound on diversity gain. Thus, we
have
\begin{align*}
d & \le -\lim_{P\rightarrow \infty}\frac{\log
\left(\frac{P^{NLK}\Gamma(NLK+MNK)}{\Gamma(NLK)\Gamma(MNK)}\int_{a}^{b}\frac{\mc{X}^{NLK-1}}{(1+P\mc{X})^{MNK+NLK}}
\mr{d}\mc{X}\right)}{\log P}
\\
&=-NLK-\lim_{P\rightarrow \infty}\frac{\log
\left(\int_{a}^{b}\frac{\mc{X}^{NLK-1}}{(1+P\mc{X})^{MNK+NLK}}
\mr{d}\mc{X}\right)}{\log P}\\
&\le-NLK-\lim_{P\rightarrow \infty}\frac{\log
\left(\frac{1}{(1+Pb)^{MNK+NLK}}\int_{a}^{b}\mc{X}^{NLK-1} \mr{d}\mc{X}\right)}{\log P}\\
&=-NLK-\lim_{P\rightarrow \infty}\frac{\log
\left(\frac{1}{(1+Pb)^{MNK+NLK}}\right)}{\log P}=MNK.
\end{align*}
In Line $3$, we lowerbound the integrand
$\frac{\mc{X}^{NLK-1}}{(1+P\mc{X})^{MNK+NLK}}$ by
$\frac{\mc{X}^{NLK-1}}{(1+Pb)^{MNK+NLK}}$; In Line $4$, we ignore
the integral because for a fixed-rate $R$, the bounds $a$ and $b$
are independent of power $P$.

Note that in all previous lowerbounding steps, the inequalities are
independent of CSIT. Therefore, the derived lowerbound is
independent of CSIT and is applicable to both CSIT and no CSIT
scenarios.
\end{proof}

The lowerbounding techniques used in the proof of Corollary
\ref{cor} motivate two explicit lowerbounds where no integral is
involved. From \eqref{eq-int}, we have the following two bounds.
\begin{align}\nonumber
Bound\ 1: &
\mr{P}_{\mr{E}}>\frac{\Gamma(NLK+MNK)}{\Gamma(NLK)\Gamma(MNK)}\frac{P^{NLK}}{(1+Pb)^{MNK+NLK}}\int_{a}^{b}\mc{X}^{NLK-1}
\mr{d}\mc{X}\\ \label{eq-lowerbound1}
&=\frac{\Gamma(NLK+MNK)}{\Gamma(NLK+1)\Gamma(MNK)}\frac{P^{NLK}}{(1+Pb)^{MNK+NLK}}\left(b^{NLK}-a^{MLK}\right).\\
\nonumber Bound\ 2: &
\mr{P}_{\mr{E}}>\frac{\Gamma(NLK+MNK)}{\Gamma(NLK)\Gamma(MNK)}{a}^{NLK-1}P^{NLK}\int_{a}^{b}
\frac{1}{(1+P\mc{X})^{MNK+NLK}}\mr{d}\mc{X}\\ \label{eq-lowerbound2}
&=\frac{\Gamma(NLK+MNK-1)}{\Gamma(NLK)\Gamma(MNK)}a^{NLK-1}P^{NLK}\left((1+Pa)^{-NLK-MNK+1}-(1+Pb)^{-NLK-MLK+1}\right).
\end{align}

\section{Numerical results}\label{sec-numeric}
In this section, we demonstrate the results of our analysis and
compare it with the results of beamforming schemes using
singular-value decomposition (SVD). We consider two systems: System
A with parameters $M=2,N=1$; System B with parameters $M=2,N=2$;
$L=1,K=1$ for both Systems A and B. For these two simple cases, we
assume no channel coding, and the decoding error probability in
\eqref{eq-wer} is equivalent to the symbol error rate (SER). We
apply BPSK, QPSK, 8QAM modulation for the SVD schemes, whose
corresponding rates are $R=1, 2, 3$ bits per channel-uses,
respectively. We plot the two strict lowerbounds in
\eqref{eq-lowerbound1} and \eqref{eq-lowerbound2}.
\begin{figure}
\centering
  \includegraphics[width=3in,angle=-90]{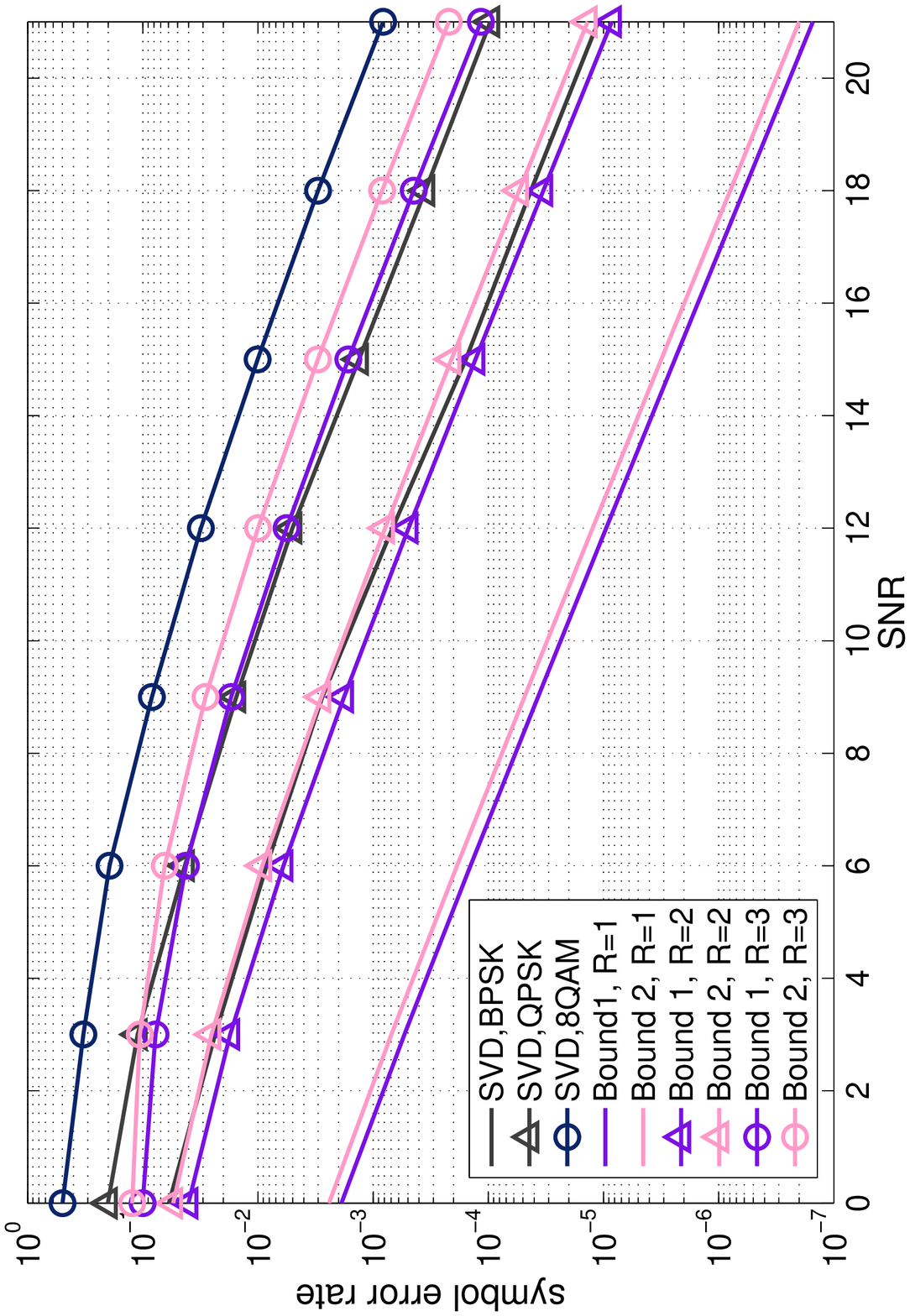}\\
  \caption{SER in a $2\times 1$ system with one channel-uses.}\label{fig-M2N1}
  \includegraphics[width=3in,angle=-90]{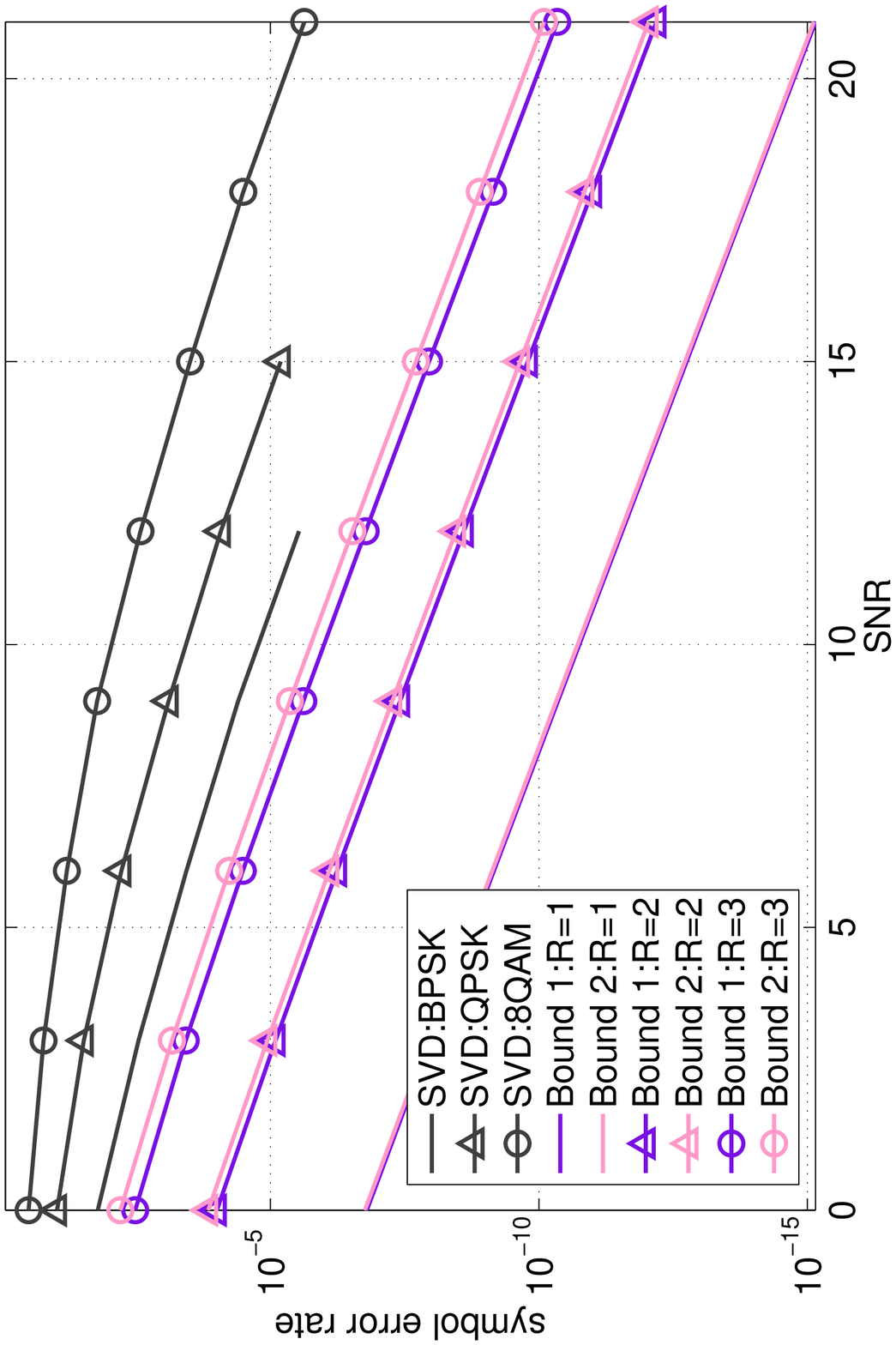}\\
  \caption{SER in a $2\times 2$ system with one channel-uses.}\label{fig-M2N2}
\end{figure}

Figs. \ref{fig-M2N1} and \ref{fig-M2N2} compare the SER of SVD
schemes with derived lowerbounds in Systems A and B, respectively.
Although the lowerbounds are loose in terms of array gain, the
diversity gain can be observed to be tight.

\section{Conclusion}\label{sec-conclusion}
We have obtained a negative result for STBC systems using
transmitter control with short-term constraints on decoding delay,
block power, and block rates. The analysis shows that fading cannot
be completely combatted with short-term constraints. The diversity
is upperbounded by the product of the numbers of transmit antennas,
receive antennas, and the independent fading block channels that
messages span over.

\footnotesize
\bibliographystyle{ieeetran}
\bibliography{IEEEabrv,STBC}
\end{document}